\newcommand{\Rmnum}[1]{\uppercase\expandafter{\romannumeral #1}}
\theoremstyle{plain}
\newtheorem{thm}{Theorem}[section]
\newtheorem{lemma}[thm]{Lemma}
\theoremstyle{definition}
\theoremstyle{remark}
\newtheorem{rem}{Remark}[section]
\numberwithin{equation}{section}
\begin{document}

\title{\textbf{Darboux Transformation for the Non-isospectral AKNS Hierarchy and Its Asymptotic Property}}

\author{Lingjun Zhou\thanks{\textit{E-mail address:}zhoulj@mail.tongji.edu.cn}
    \medskip\\
    \textit{\small Department of Mathematics, Tongji University, Shanghai, 200092, PR China}
    }
\date{}


\maketitle


\noindent\hrulefill

\noindent\textbf{\large Abstract}{\hfill}
  \smallskip
  {\small

In this article, the Darboux transformation for the non-isospectral
AKNS hierarchy is constructed. We show that the Darboux
transformation for the non-isospectral AKNS hierarchy is not an
auto-B\"acklund transformation, because the integral constants of
the hierarchy will be changed after the transformation. The
transform rule of the integral constants will be also derived. By
this means, the soliton solutions of the nonlinear equations derived
by the non-isospectral AKNS hierarchy can be found.

  \bigskip
  \noindent\emph{MSC 2000:}  35Q53; 35Q55\\
  \noindent\emph{PACS:} 02.30.Ik; 02.30.Jr

  \noindent\emph{Keywords: non-isospectral AKNS hierarchy, Darboux transformation, integral constant}
  }

\noindent\hrulefill


\section{Introduction}

The AKNS hierarchy is one of the most important integrable systems.
Many nonlinear equations, which involve many famous nonlinear
differential equation, are equivalent to the integrability condition
of AKNS hierarchy \cite{GH,GHZ,RS}. A unified approach to construct
Darboux transformations with matrix form for AKNS hierarchy was
founded by C.H.\ Gu \cite{Gu3,Gu,Gu1,Gu2,GZ,GZ1}. This approach
works for many soliton equations \cite{Gu}. The method in these
literatures can also be generalized to the non-isospectral AKNS
hierarchy \cite{Zhoulj}. The Darboux transformation for the
non-isospectral AKNS hierarchy has an essential difference from the
standard case, that its integral constants are not conserved by the
transform. So one cannot get the soliton solution of the relevant
nonlinear equation by acting the Darboux transformation on the seed
solution of its own \cite{TZ,TZ1,Zhoulj}. In this article, one will
see that the relation of the integral constants between the relevant
non-isospectral AKNS hierarchies can be calculated through the
asymptotic property of the elementary solution. Then the soliton
solution of a certain differential equation can be found by acting
the Darboux transformation on the seed solution of another equation.

\section{Non-isospectral AKNS Hierarchy}

The standard AKNS hierarchy is generalized by Ablowitz, Kaup, Newell
and Segur\cite{AKNS} in 1974. Now we use the language in \cite{TU}
to define it. We assume throughout that the matrix $J\in
\mathrm{sl}(N)$ in this article is fixed, diagonal, with distinct
complex eigenvalues,\begin{equation}
J=\mathrm{diag}\,(J_1,\ldots,J_N),\quad J_i\neq J_j
\quad\mathrm{if}\quad i\neq j,\end{equation}and
\begin{align}\mathrm{sl}(N)_J &=\{X\in \mathrm{sl}(N):\;
[J,X]=0\}\\\mathrm{sl}(N)_J^\bot&=\{Y\in \mathrm{sl}(N):\;
\mathrm{tr}(XY)=0\quad\mbox{for }X\in
\mathrm{sl}(N)_J\}\end{align}denote the centralizer of $J$ and its
orthogonal complement in $\mathrm{sl}(N)$ respectively. One can
easily verify the following facts.

\begin{lemma} $sl(N)$ has the direct sum decomposition $sl(N)=sl(N)_J\oplus
sl(N)_J^\bot$ with respect to vector space. \end{lemma}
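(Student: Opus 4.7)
The plan is to make both summands very concrete by exploiting that $J$ is diagonal with pairwise distinct eigenvalues, and then check the two usual properties of a direct sum: $\mathrm{sl}(N)_J\cap\mathrm{sl}(N)_J^\bot=0$ and $\mathrm{sl}(N)_J+\mathrm{sl}(N)_J^\bot=\mathrm{sl}(N)$.

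First I would use the elementary matrix basis $\{E_{ij}\}_{1\le i,j\le N}$ and compute $[J,E_{ij}]=(J_i-J_j)E_{ij}$. Since the $J_k$ are distinct, the operator $\mathrm{ad}\,J$ on $\mathrm{gl}(N)$ is diagonalizable with $0$-eigenspace spanned by the $E_{ii}$ and with the other eigenspaces spanned by the off-diagonal $E_{ij}$ ($i\neq j$). Intersecting with the trace-zero condition, this identifies $\mathrm{sl}(N)_J$ as the space of traceless diagonal matrices.

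Next I would compute the trace form in the same basis, using $\mathrm{tr}(E_{ij}E_{kl})=\delta_{il}\delta_{jk}$. If $Y\in\mathrm{sl}(N)$ is orthogonal to every traceless diagonal $X$, then $\mathrm{tr}(XY)=\sum_i X_{ii}Y_{ii}=0$ for all $(X_{11},\ldots,X_{NN})$ with $\sum_iX_{ii}=0$, forcing all $Y_{ii}$ equal; combined with $\mathrm{tr}(Y)=0$ this gives $Y_{ii}=0$. Thus $\mathrm{sl}(N)_J^\bot$ is precisely the space of matrices with vanishing diagonal entries.

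With these two descriptions in hand, the direct sum is immediate: any $M\in\mathrm{sl}(N)$ splits as $M=D+N$, where $D$ is the diagonal part of $M$ (automatically traceless, so $D\in\mathrm{sl}(N)_J$) and $N=M-D$ has zero diagonal (so $N\in\mathrm{sl}(N)_J^\bot$), while any element simultaneously diagonal and off-diagonal is zero. I do not anticipate a genuine obstacle here; the only point requiring a moment of care is verifying that $\mathrm{sl}(N)_J^\bot$ as defined via the trace pairing really coincides with the off-diagonal matrices inside $\mathrm{sl}(N)$, which is settled by the short computation above using that the trace form is non-degenerate on $\mathrm{sl}(N)$.
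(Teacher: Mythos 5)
Your proof is correct; the paper itself gives no argument for this lemma (it is listed among facts "one can easily verify"), and your direct computation in the basis $\{E_{ij}\}$ is exactly the standard verification intended, identifying $\mathrm{sl}(N)_J$ with the traceless diagonal matrices and $\mathrm{sl}(N)_J^\bot$ with the matrices of vanishing diagonal (the latter being precisely the paper's Lemma 2.2), after which the splitting into diagonal and off-diagonal parts gives the direct sum. The only cosmetic quibbles are the reuse of the letter $N$ for both the matrix size and the off-diagonal summand, and that the final appeal to non-degeneracy of the trace form is unnecessary since your explicit computation already settles the identification.
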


\begin{lemma} The matrix $P\in sl(N)_J^\bot$ if and only if the diagonal coefficients
of P vanish. \end{lemma}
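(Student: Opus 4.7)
The plan is to exploit the fact that $J$ has distinct diagonal entries, so that $\mathrm{sl}(N)_J$ admits a very explicit description, and then reduce the orthogonality condition to an elementary linear-algebra statement about $N$-tuples.

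First I would identify $\mathrm{sl}(N)_J$ concretely. Because $J=\mathrm{diag}(J_1,\ldots,J_N)$ with pairwise distinct $J_i$, the equation $[J,X]=0$ forces the $(i,j)$-entry of $X$ to satisfy $(J_i-J_j)X_{ij}=0$, so $X_{ij}=0$ whenever $i\neq j$. Hence $\mathrm{sl}(N)_J$ is exactly the space of traceless diagonal matrices, i.e.\ the diagonal matrices $X=\mathrm{diag}(x_1,\ldots,x_N)$ with $\sum_i x_i=0$.

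For a generic $P=(P_{ij})\in \mathrm{sl}(N)$ and any such $X$, a direct computation gives $\mathrm{tr}(XP)=\sum_{i=1}^N x_i P_{ii}$. This reduces the statement to: the vector $(P_{11},\ldots,P_{NN})$ is orthogonal to every $(x_1,\ldots,x_N)$ with $\sum x_i=0$ if and only if all $P_{ii}=0$. The forward implication (if the diagonal of $P$ vanishes, then $\mathrm{tr}(XP)=0$ for every $X\in \mathrm{sl}(N)_J$) is immediate from this formula. For the converse, orthogonality to the hyperplane $\{\sum x_i=0\}$ forces $(P_{11},\ldots,P_{NN})$ to be proportional to $(1,\ldots,1)$; combining this with $P\in \mathrm{sl}(N)$, so that $\sum_i P_{ii}=0$, gives $P_{ii}=0$ for all $i$.

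There is no real obstacle; the only mild subtlety is that in the converse direction one must use both pieces of information, namely that $\mathrm{tr}(XP)=0$ for \emph{every} traceless diagonal $X$ and that $P$ itself is traceless. Forgetting the tracelessness of $P$ would leave the full one-parameter family $P_{ii}=c$ as a possibility, so it is essential that we are working inside $\mathrm{sl}(N)$ and not inside $\mathrm{gl}(N)$.
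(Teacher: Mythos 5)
Your proof is correct and complete; the paper states this lemma without proof (``one can easily verify the following facts''), and your argument --- identifying $\mathrm{sl}(N)_J$ with the traceless diagonal matrices via the distinctness of the $J_i$, computing $\mathrm{tr}(XP)=\sum_i x_iP_{ii}$, and using the tracelessness of $P$ to rule out the constant-diagonal case in the converse --- is exactly the intended verification. No gaps; in particular you correctly flag that the converse genuinely needs $P\in\mathrm{sl}(N)$ rather than $\mathrm{gl}(N)$.
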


\begin{lemma} The mapping $\mathrm{ad}\,J:sl(N)\to sl(N)_J^\bot$
is a homomorphism, and $\ker(\mathrm{ad}\,J)=sl(N)_J$, which is
equivalent to that the restriction of the mapping
$\mathrm{ad}\,J:sl(N)_J^\bot\to sl(N)_J^\bot$ is an isomorphism.
\end{lemma}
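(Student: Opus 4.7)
The plan is to verify the three claims (linearity and image in $sl(N)_J^\bot$, identification of the kernel, and the equivalence with the restriction being an isomorphism) by direct coordinate computation, using that $J$ is diagonal with distinct entries.

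First I would establish that $\mathrm{ad}\,J$ sends $sl(N)$ into $sl(N)_J^\bot$. Linearity of $X\mapsto [J,X]$ is immediate, and since $J$ is diagonal, the entries of $[J,X]$ are $([J,X])_{ij}=(J_i-J_j)X_{ij}$, so the diagonal entries vanish; by Lemma~2 this places $[J,X]$ in $sl(N)_J^\bot$. (The fact that $[J,X]\in sl(N)$ follows because commutators are traceless.)

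Next I would compute the kernel: $[J,X]=0$ is by definition the statement that $X$ lies in the centralizer $sl(N)_J$, so $\ker(\mathrm{ad}\,J)=sl(N)_J$. To derive the equivalence with the restriction being an isomorphism, I would use Lemma~1: since $sl(N)=sl(N)_J\oplus sl(N)_J^\bot$, the restriction $\mathrm{ad}\,J|_{sl(N)_J^\bot}$ has kernel $sl(N)_J\cap sl(N)_J^\bot=\{0\}$ and hence is injective. For surjectivity onto $sl(N)_J^\bot$, given $Y\in sl(N)_J^\bot$ (so $Y_{ii}=0$ by Lemma~2), define $X$ by $X_{ij}=Y_{ij}/(J_i-J_j)$ for $i\neq j$ and $X_{ii}=0$; this is well-defined because the $J_i$ are pairwise distinct, the matrix $X$ has zero diagonal hence lies in $sl(N)_J^\bot$, and $[J,X]=Y$ by the formula recalled above.

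There is no real obstacle here; the only thing one needs is the invertibility of the scalars $J_i-J_j$ for $i\neq j$, which is guaranteed by hypothesis. The argument simultaneously shows that the full map $\mathrm{ad}\,J:sl(N)\to sl(N)_J^\bot$ is surjective with kernel $sl(N)_J$, so by the first isomorphism theorem it factors through the isomorphism $sl(N)/sl(N)_J\cong sl(N)_J^\bot$, and under the decomposition of Lemma~1 this factorisation is exactly the restriction to $sl(N)_J^\bot$.
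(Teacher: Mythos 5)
Your argument is correct: the coordinate formula $([J,X])_{ij}=(J_i-J_j)X_{ij}$, the identification of the kernel with the centralizer, and the explicit inverse $X_{ij}=Y_{ij}/(J_i-J_j)$ (using the distinctness of the $J_i$) together establish every claim of the lemma. The paper states this lemma as an easy verification and gives no proof, and your computation is precisely the standard argument it intends, so nothing further is needed.
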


We now turn to the non-isospectral AKNS hierarchy. Assume
that\begin{equation} U(\lambda)=\lambda J+P,\quad
V(\lambda)=\sum_{i=0}^nV_i\lambda^i,\end{equation} where $\lambda$
satisfies the scalar equation
\begin{equation}\label{As}\lambda_t=\sum_{i=0}^n f_i\lambda^i\end{equation} and $P\in
sl(N)_J^\bot,\; V_i\in sl(N)$ are matrices independent of $\lambda$.
The coupled $N\times N$ matrix equations
\begin{equation}\label{AKNS}
\left\{\begin{aligned}
\Phi_x&=U(\lambda)\Phi\\
\Phi_t&=V(\lambda)\Phi\end{aligned}\right. \end{equation} are called
AKNS system and the system is integrable if and only if the zero
curvature condition
\begin{equation}\label{ZCE} U_t-V_x+[U,V]=0\end{equation} holds.
Comparing the coefficients of $\lambda$ in \eqref{ZCE} leads to the
following equations
\begin{align}\label{AKNSCon}
&\left[J,V_n\right]=0,\\
\label{AKNSCoi}&f_iJ-V_{i,x}+\left[J,V_{i-1}\right]+\left[P,V_i\right]=0
\quad(1\leq i\leq
n),\\\label{AKNSCo0}&f_0J+P_t-V_{0,x}+\left[P,V_0\right]=0.\end{align}
Define \begin{equation} V^{diag}_{i}=\pi_0(V_i),\quad
V^{off}_{i}=\pi_1(V_i), \end{equation} where $\pi_0,\pi_1\in
\mathrm{End}\,(sl(N))$ denote the projection of $sl(N)$ onto
$sl(N)_J$ and $sl(N)_J^\bot$ respectively, then one can find the
following recurrence formulae
\begin{align} &V_n^{off}=(\mathrm{ad}\,J)^{-1}0=0,\\ \label{AKNSVdx}
&V_{i,x}^{diag}=f_iJ+\pi_0\left(\left[P,V_i^{off}\right]\right)\quad(0\leq
i\leq n), \\ \label{AKNSVo}
&V_{i}^{off}=(\mathrm{ad}\,J)^{-1}\left(V_{i+1,x}^{off}-\pi_1\left(\left[P,V_{i+1}\right]\right)\right)\quad(0\leq
i\leq n-1).\end{align} and the nonlinear PDE on $P$\begin{equation}
\label{AKNSPE} P_t-V_{0,x}^{off}+\left[P,V_0^{diag}\right]=0.
\end{equation} If $P$ satisfies the asymptotic condition
\begin{equation}\label{AKNSPC}
\lim_{x\to\infty}|x|^k\partial^m_x(P(x,t))=0 \qquad \mbox{for $t$
and nonnegative integer $k,m$£¬} \end{equation}i.e. $P(t,\cdot)$ is
in the Schwartz class which is denoted by
$\mathscr{S}(\mathbb{R},sl(N)_J^\bot)$, then $V_i$ will be
determined uniformly up to $n$ integral constants $\alpha_i(t)\in
sl(N)$. In fact, $V_i^{diag}(t)$ can be defined by
\begin{equation}\label{AKNSIC}
V_i^{diag}=\alpha_i(t)+f_iJx+\int_{-\infty}^x\pi_0\left(\left[P,V_i^{off}\right]\right)\mathrm{d}x.
\end{equation} Moreover, we have the following conclusion.

\begin{thm} For $t$ and nonnegative integer $k,m$, \begin{equation}\label{AKNSVC}
\lim_{x\to-\infty}|x|^k\partial^m_x(V_i-f_iJx-\alpha_i(t))=0\qquad(0\leq
i\leq n), \end{equation} and especially \begin{equation}
\lim_{x\to-\infty}(V-\sum_{i=0}^nf_iJx\lambda^i)=\sum_{i=0}^n\alpha_i(t)\lambda^i.
\end{equation} \end{thm}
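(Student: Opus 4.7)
My plan is to prove \eqref{AKNSVC} by downward induction on $i$, going from $i=n$ down to $i=0$, with the inductive hypothesis being precisely \eqref{AKNSVC} at index $i$ together with the Schwartz-type decay of $V_i^{off}$ at $x\to-\infty$.

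The base case $i=n$ is immediate: the recurrence gives $V_n^{off}=0$, and then \eqref{AKNSIC} at level $n$ reduces to $V_n^{diag}=\alpha_n(t)+f_nJx$, so $V_n-f_nJx-\alpha_n(t)=0$. For the inductive step, suppose the claim holds at level $i+1$. Decompose
\begin{equation*}
[P,V_{i+1}] = f_{i+1}x\,[P,J] + [P,\alpha_{i+1}(t)] + [P,V_{i+1}^{diag}-f_{i+1}Jx-\alpha_{i+1}(t)] + [P,V_{i+1}^{off}].
\end{equation*}
Because $P\in\mathscr{S}(\mathbb{R},sl(N)_J^\bot)$, the first two summands are Schwartz in $x$ (the factor $x$ is absorbed by the rapid decay of $P$), and the last two are Schwartz as $x\to-\infty$ by the inductive hypothesis. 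Likewise $V_{i+1,x}^{off}$ is Schwartz as $x\to-\infty$. Applying the bounded linear operator $(\mathrm{ad}\,J)^{-1}$ from \eqref{AKNSVo} then shows that $V_i^{off}$, together with all its $x$-derivatives multiplied by any polynomial weight $|x|^k$, tends to $0$ as $x\to-\infty$.

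Given this decay of $V_i^{off}$, I estimate $V_i^{diag}-f_iJx-\alpha_i(t)$ from the explicit integral representation \eqref{AKNSIC}. The integrand $\pi_0([P,V_i^{off}])$ is Schwartz as $x\to-\infty$ (product of $P\in\mathscr{S}$ with a rapidly decaying matrix), hence for each $k$ one has a bound $|y|^{-k-1}$ on it for $y$ very negative, which integrates to a $O(|x|^{-k})$ bound on $\int_{-\infty}^x$; this yields the $k,m=0$ case of \eqref{AKNSVC} for the diagonal part. For $m\geq 1$, differentiating \eqref{AKNSIC} once produces $\pi_0([P,V_i^{off}])$ itself, and higher derivatives can be commuted through $\pi_0$ and the bracket, each time landing on either $P$ (Schwartz) or on derivatives of $V_i^{off}$ (already shown Schwartz at $-\infty$). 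Combining the diagonal and off-diagonal estimates gives \eqref{AKNSVC} at level $i$, closing the induction; the second displayed formula then follows by multiplying by $\lambda^i$ and summing.

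The main obstacle is correctly handling the linearly growing term $f_{i+1}Jx$ inside $V_{i+1}$: naively $V_{i+1}$ is unbounded as $x\to-\infty$, so one cannot just say the commutator decays. The key observation that saves the argument is that $P$ is Schwartz, so any polynomial in $x$ times $P$ still decays faster than any power, and therefore the non-decaying piece of $V_{i+1}$ contributes only Schwartz terms once commuted with $P$. Everything else is bookkeeping involving the boundedness of $(\mathrm{ad}\,J)^{-1}$ on $sl(N)_J^\bot$ (Lemma on the isomorphism property) and the interplay between the projections $\pi_0,\pi_1$ and the commutator with $J$.
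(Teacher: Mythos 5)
Your proposal is correct and takes essentially the same route as the paper: a downward induction in which the commutator $[P,V_{i+1}]$ is split so that the Schwartz decay of $P$ absorbs the linearly growing piece $f_{i+1}Jx$, followed by inverting $\mathrm{ad}\,J$ in the off-diagonal recurrence \eqref{AKNSVo} and reading off the diagonal part from the integral formula \eqref{AKNSIC}; you merely spell out the tail estimate that the paper leaves implicit. The only quibble is a harmless off-by-one in that estimate: to make $|x|^k$ times the remainder tend to zero you should bound the integrand by $|y|^{-k-2}$ (giving $O(|x|^{-k-1})$) rather than $|y|^{-k-1}$, which the Schwartz decay of $P$ of course permits.
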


\begin{proof} Assume inductively that \eqref{AKNSVC} holds for $i$, then \begin{equation}\begin{split}
&\lim_{x\to-\infty}|x|^k\partial^m_x([P,V_i])\\
=&\lim_{x\to-\infty}|x|^k\partial^m_x([P,V_i-f_iJx-\alpha_i(t)]+[P,f_iJx+\alpha_i(t)])=0.
\end{split}\end{equation} It follows that \begin{equation}
\lim_{x\to-\infty}|x|^k\partial^m_xV_{i-1}^{off}=(\mathrm{ad}\,J)^{-1}\lim_{x\to-\infty}|x|^k
\partial^m_x\left(V_{i,x}^{off}-\pi_1\left(\left[P,V_i\right]\right)\right)=0.
\end{equation} So $V_{i-1}^{diag}$ can be defined by \begin{equation}
V_{i-1}^{diag}=\alpha_{i-1}(t)+f_{i-1}Jx+\int_{-\infty}^x\pi_0\left(\left[P,V_{i-1}^{off}\right]\right)
\mathrm{d}x,\end{equation} and \eqref{AKNSVC} holds for $i-1$.
\end{proof}

\begin{rem} If the polynomial $f(\lambda)$ degenerates to vanish, the relevant AKNS hierarchy is called
\textit{isospectral}, otherwise we called it
\textit{non-isospectral}. That is to say the non-isospectral case is
a generalization of the standard case.
\end{rem}

\section{Darboux Transformation for the Non-isospectral AKNS
Hierarchy}

The Darboux transformation for the non-isospectral problem with
$2\times 2$ matrix coefficients is constructed by C. Rogers and
W.K.\ Schief \cite{RS}. Their method can be generalize to $N\times
N$ case \cite{Zhoulj}. We states the conclusion here.

\begin{thm}\label{DTThm} Consider the $N\times N$ matrix equation
\begin{equation}\label{eq1} \Phi_t=V(\lambda)\Phi,\qquad  V(\lambda)\in
sl(N)\end{equation}
where\begin{equation}V(\lambda)=\sum^n_{i=0}V_i\lambda^i.\end{equation}
The spectral parameter $\lambda$ satisfies the scalar differential
equation
\begin{equation}\label{spde}\lambda_t=f(\lambda)=\sum_{i=0}^{n+2}f_i\lambda^i \end{equation}
where $f(\lambda)$ is a polynomial of $\lambda$,
the highest power of $\lambda$ in $f(\lambda)$ is not more than
$n+2$. Let $h_1,\ldots,h_N$ be the known vector-valued
eigenfunctions of the equation (\ref{eq1}) corresponding to the
parameters $\lambda_{1},\ldots,\lambda_{N}$ and at least two of them
are different. Set
\begin{equation}S=H\Lambda H^{-1},\quad P(\lambda)=\lambda I-S\end{equation}
where\begin{equation}
H=(h_1,\ldots,h_N),\quad\Lambda=diag(\lambda_1,\ldots,\lambda_N).
\end{equation}
Then, the transformation
\begin{align}
&\Phi'=D(\lambda)\Phi=p(\lambda)P(\lambda)\Phi\label{DM}\\
&V'(\lambda)=P(\lambda)V(\lambda)P^{-1}(\lambda)+\frac{dP(\lambda)}{dt}P^{-1}(\lambda)+\frac
{dp(\lambda)}{dt}p^{-1}(\lambda) \label{DT}\end{align} where
\begin{equation} p(\lambda)^{-N}=\det P(\lambda)=(\lambda-\lambda_1)\cdots(\lambda-\lambda_N)
\end{equation} is a Darboux transformation for (\ref{eq1}), i.e.
the following conditions are satisfied.

(i)$V'(\lambda)\in sl(N)$.

(ii)$V'(\lambda)$ has the same polynomial structure as $V(\lambda)$.

\end{thm}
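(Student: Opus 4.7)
The plan is to verify (i) and (ii) separately for the formula
\[
V'(\lambda)=P(\lambda)V(\lambda)P(\lambda)^{-1}+\tfrac{dP(\lambda)}{dt}P(\lambda)^{-1}+\tfrac{dp(\lambda)/dt}{p(\lambda)}I.
\]
For claim (i), I would take traces termwise. Since conjugation preserves the trace, $\mathrm{tr}(PVP^{-1})=\mathrm{tr}V=0$. By Jacobi's formula combined with the defining relation $\det P(\lambda)=p(\lambda)^{-N}$,
\[
\mathrm{tr}\!\left(\tfrac{dP}{dt}P^{-1}\right)=\tfrac{d}{dt}\log\det P(\lambda)=-N\,\tfrac{dp/dt}{p},
\]
which exactly cancels the $N(dp/dt)/p$ contribution from the third term. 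Hence $\mathrm{tr}V'(\lambda)=0$, i.e.\ $V'(\lambda)\in sl(N)$.

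For claim (ii), a priori $V'(\lambda)$ is a rational function of $\lambda$ with simple poles possibly at $\lambda_1,\ldots,\lambda_N$ (inherited from $P(\lambda)^{-1}$) and polynomial growth at infinity. I would use the partial-fraction decomposition
\[
P(\lambda)^{-1}=H(\lambda I-\Lambda)^{-1}H^{-1}=\sum_{j=1}^{N}\frac{P_j}{\lambda-\lambda_j},\qquad P_j=HE_{jj}H^{-1},
\]
in which each $P_j$ is a rank-one projector with image $\mathrm{span}(h_j)$. The residue of $V'(\lambda)$ at $\lambda_j$ is then a sum of rank-one pieces, each of the form $(\text{matrix})\cdot P_j$. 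Since $\mathrm{Im}\,P_j=\mathrm{span}(h_j)$, it suffices to evaluate each residue on $h_j$ and invoke the three defining relations
\[
S h_j=\lambda_j h_j,\qquad h_{j,t}=V(\lambda_j)h_j,\qquad \lambda_{j,t}=f(\lambda_j),
\]
together with the $t$-derivative of $Sh_j=\lambda_jh_j$ (which expresses $S_th_j$ as $\lambda_{j,t}h_j+(\lambda_jI-S)h_{j,t}$). A direct calculation should then show that the total residue vanishes at every $\lambda_j$.

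It remains to bound the degree at infinity. Expanding $P(\lambda)^{-1}$ as a Laurent series in $\lambda^{-1}$, the conjugation term $PVP^{-1}$ is polynomial of degree $n$. The term $(dP/dt)P^{-1}$ carries the contribution from $\lambda_t=f(\lambda)$ and can a priori reach degree $n+1$; the scalar $(dp/dt)/p$ absorbs the matching trace piece, and the bound $\deg f\le n+2$ is chosen precisely so that these high-order terms combine into a polynomial of degree at most $n$, matching the structure of $V(\lambda)$. The main obstacle I anticipate is the residue step in (ii): combining the identities $Sh_j=\lambda_jh_j$, $h_{j,t}=V(\lambda_j)h_j$, and $\lambda_{j,t}=f(\lambda_j)$ in exactly the right order to annihilate every rank-one piece at each $\lambda_j$ simultaneously is where the essential bookkeeping lies; the rest is direct expansion and degree counting.
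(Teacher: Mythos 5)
Your proposal is correct and follows essentially the same route as the paper: your identity $V'P=PVP^{-1}\cdot P+P_tP^{-1}\cdot P+(p_t/p)P$ with $p_t/p=-\frac1N\sum_i\frac{f(\lambda)-f(\lambda_i)}{\lambda-\lambda_i}=-g(\lambda)$ is exactly the relation $V'(\lambda)(\lambda I-S)=(\lambda I-S)V(\lambda)+(f(\lambda)I-S_t)-g(\lambda)(\lambda I-S)$ displayed after the theorem, and your residue cancellation at each $\lambda_j$ (via $Sh_j=\lambda_jh_j$, $h_{j,t}=V(\lambda_j)h_j$, $\lambda_{j,t}=f(\lambda_j)$) together with the degree count at infinity is the standard consistency check underlying the paper's coefficient comparison. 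The computations you defer do go through as you describe, the key observation being that $p_t/p=-g(\lambda)$ is itself a polynomial and so contributes neither poles nor excess degree.
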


According to Theorem \ref{DTThm}, $V'(\lambda)$ satisfies the
equation \begin{equation} V'(\lambda)(\lambda I-S)=(\lambda
I-S)V(\lambda)+(f(\lambda)I-S_t)-g(\lambda)(\lambda I-S),
\end{equation} where \begin{equation}
g(\lambda)=\frac1N\sum^N_{i=1}\frac{f(\lambda)-f(\lambda_i)}{\lambda-\lambda_i}=\sum^{n+1}_{i=0}g_i\lambda^i.
\end{equation} Comparing the coefficients of $\lambda$ leads to \begin{equation}\label{VEP}\begin{split}
V'_n&=V_n+(f_{n+1}-g_n)+g_{n+1}S=V_n+f_{n+2}S+f_{n+1}-g_n,\\
V'_{n-j}&=V_{n-j}+V'_{n-j+1}S-SV_{n-j+1}+(f_{n-j+1}-g_{n-j})+g_{n-j+1}S\\
&=V_{n-j}+\sum^j_{k=1}\left[V_{n-j+k},S\right]S^{k-1}+\sum^{j+1}_{k=0}f_{n-j+k+1}S^k-g_{n-j}.
\end{split}\end{equation} Such Darboux transformation works evidently for the non-isospectral
AKNS hierarchy, nevertheless the integral constants of $V'(\lambda)$
are generally different from the ones of $V(\lambda)$. The
asymptotic property of $D$ is necessary to determined the relation
between the two sets of the integral constants. To state the
asymptotic property of $D$, we need part of the scattering theory of
Beals and Coifman \cite{BC}. Let \begin{equation}
\Gamma_J=\{\zeta\in\mathbb{C}:\mathrm{Re}\,(\zeta(J_j-J_k))=0,\,1\leq
j\leq k\leq N\}. \end{equation}

\begin{thm}[Beals-Coifman]\label{BCS} If $P\in\mathscr{S}(\mathbb{R},sl(N)_J^\bot)$, then for
$\lambda\in\mathbb{C}\setminus\Gamma_J$, there exist elementary
solutions $\Phi_l(x,\lambda)$ and $\Phi_r(x,\lambda)$ of the
differential equation \begin{equation} \Phi_x=(\lambda J+P)\Phi,
\end{equation} which satisfy \begin{gather} \Phi_l(x,\lambda)=O(\exp(\lambda
Jx))\qquad x\to-\infty\\\Phi_r(x,\lambda)=O(\exp(\lambda Jx))\qquad
x\to +\infty\end{gather} \end{thm}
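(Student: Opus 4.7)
The plan is to construct $\Phi_l$ by the standard trick of factoring out the leading exponential and reducing to a Volterra-type integral equation; the construction of $\Phi_r$ is entirely analogous with the roles of $\pm\infty$ reversed.

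First I would write $\Phi_l(x,\lambda) = m(x,\lambda)\exp(\lambda J x)$. Substituting into $\Phi_x = (\lambda J + P)\Phi$ yields
\begin{equation}
m_x = \lambda[J,m] + Pm,
\end{equation}
so the asymptotic condition $\Phi_l = O(\exp(\lambda J x))$ at $-\infty$ becomes the requirement that $m$ be bounded there, normalized by $m \to I$. Writing out the $(i,j)$-component,
\begin{equation}
\partial_x m_{ij} - \lambda(J_i - J_j) m_{ij} = (Pm)_{ij},
\end{equation}
and multiplying by the integrating factor $e^{-\lambda(J_i - J_j)x}$ converts the left-hand side into an exact derivative.

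The hypothesis $\lambda \notin \Gamma_J$ guarantees $\mathrm{Re}(\lambda(J_i - J_j)) \neq 0$ for all $i \neq j$, and the sign of this real part dictates the direction in which we must integrate so that the Green's kernel stays bounded. Specifically, set $s_{ij} = -\infty$ when $\mathrm{Re}(\lambda(J_i - J_j)) \le 0$ (so $e^{\lambda(J_i - J_j)(x-y)}$ is bounded for $y \le x$) and $s_{ij} = +\infty$ otherwise; this yields the coupled Volterra system
\begin{equation}
m_{ij}(x,\lambda) = \delta_{ij} + \int_{s_{ij}}^{x} e^{\lambda(J_i - J_j)(x-y)}(Pm)_{ij}(y)\,dy,
\end{equation}
whose bounded solutions are precisely the $m$ we seek (the sign choice also forces the boundary contribution at $s_{ij}$ to vanish).

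I would then solve this system by Picard iteration, starting from $m^{(0)} = I$. Because every kernel $e^{\lambda(J_i - J_j)(x-y)}$ is uniformly bounded by $1$ on its prescribed range of integration, the successive differences $m^{(k+1)} - m^{(k)}$ are controlled by iterated $L^{1}$-integrals of $\|P\|$, which are finite by the Schwartz hypothesis $P \in \mathscr{S}(\mathbb{R},\mathrm{sl}(N)_J^\bot)$. Estimating these iterated integrals by a simplex bound produces a $1/k!$ factor, giving absolute and uniform convergence of the Neumann series on compact subsets of $\mathbb{R} \times (\mathbb{C}\setminus\Gamma_J)$. The resulting fixed point is the desired $m$, and $\Phi_l = m\exp(\lambda J x)$ has the asserted asymptotic behavior at $-\infty$; the analogous construction normalizing $m \to I$ at $+\infty$, with all $s_{ij}$ flipped, produces $\Phi_r$.

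The principal obstacle is the mixed integration directions: when different $(i,j)$ entries are integrated from $-\infty$ versus $+\infty$, the Picard iterates do not immediately fit a single clean simplex bound, and one must either iterate twice to symmetrize the operator or work in a weighted norm adapted to the sign pattern of $\mathrm{Re}(\lambda(J_j - J_k))$ in order to recover the factorial decay needed for absolute convergence. Once this bookkeeping is arranged, the remaining analysis (uniqueness, joint continuity in $(x,\lambda)$, and the $O(\exp(\lambda J x))$ asymptotics) is routine.
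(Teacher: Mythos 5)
The paper itself offers no proof of this statement: it is imported verbatim from Beals and Coifman \cite{BC}. Judged on its own terms, your construction has a genuine gap at the step where you solve the mixed-direction integral equation by Picard iteration. The $1/k!$ simplex bound is a Volterra phenomenon: it requires all integrations to run in the same direction, so that the $k$-th iterated kernel is supported on a nested simplex. Once some entries are integrated from $-\infty$ and others from $+\infty$, the iterated integrals are no longer simplex integrals, there is no factorial gain, and the Neumann series need not converge when $\|P\|_{L^1}$ is not small. This is not a bookkeeping issue that a weighted norm or ``iterating twice'' can repair: the operator equation you wrote is exactly the Beals--Coifman equation for the solution that is bounded on all of $\mathbb{R}$ and normalized at $-\infty$, and for general Schwartz $P$ that problem is genuinely unsolvable at exceptional values of $\lambda$ (bound states); this is precisely why the full Beals--Coifman theory carries an exceptional set. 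So your argument, as written, cannot deliver existence for every $\lambda\in\mathbb{C}\setminus\Gamma_J$.

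The statement actually being proved is weaker than what your global equation aims at: $\Phi_l$ is only required to be controlled as $x\to-\infty$ and $\Phi_r$ only as $x\to+\infty$, with no boundedness demanded at the opposite end. For that one-sided statement the correct (and classical, Levinson-type) route is to localize near the relevant infinity: choose $x_0$ so negative that $\int_{-\infty}^{x_0}\|P(y)\|\,\mathrm{d}y$ is small, and in your integral equation take $s_{ij}=-\infty$ when $\mathrm{Re}\,(\lambda(J_i-J_j))\le 0$ but $s_{ij}=x_0$ (a finite point, not $+\infty$) otherwise. On $(-\infty,x_0]$ every kernel is bounded by $1$, the map is a contraction, and the unique fixed point satisfies $m\to I$ as $x\to-\infty$; then $\Phi_l=m\exp(\lambda Jx)$ extends to all $x$ by ordinary linear ODE theory, since the asymptotic condition only constrains the behavior at $-\infty$. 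This gives the claim for every $\lambda\notin\Gamma_J$ with no exceptional set, and the mirror-image argument at $+\infty$ produces $\Phi_r$. With that modification your outline becomes a correct proof; without it, the convergence claim and hence the existence assertion are unjustified.
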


Now one can construct a Darboux transformation for non-isospectral
AKNS hierarchy in the following way. Assume that
$J_1>J_2>\cdots>J_N$ without losing generality. Firstly, set
\begin{equation} \lambda_i(0)\in \mathbb{C}\setminus\Gamma_J,\quad
\text{for }1\leq i\leq N,
\end{equation} then there exists an interval $(-t_0,t_0)$, such that
$\lambda_i(t)$ solves \eqref{spde} and
$\lambda_i(t)\in\mathbb{C}\setminus\Gamma_J$. Then the eigenfunction
$h_i(t,x)$ solving the integrable non-isospectral AKNS hierarchy can
be denoted by \begin{equation}
h_i(t,x)=\Phi_l(x,\lambda_i(t))L_i(t)=\Phi_r(x,\lambda_0(t))R_i(t).
\end{equation} According to Theorem \ref{BCS}, one can easily find that \begin{equation} \label{DetH}
\det H=\sum_{\sigma\in
S_N}a_\sigma(t)O\left(\exp\left(\sum_{k=0}^N\lambda_k(0)J_{\sigma(k)}\right)x\right),
\end{equation} where $H=(h_1,\ldots,h_N)$ and $S_N$ denotes permutation group with order
$N$. Let \begin{equation}\begin{split}
m=\min\left\{\sum_{k=0}^N\lambda_k(0)J_{\sigma(k)}:\sigma\in
S_N\right\},\\
M=\max\left\{\sum_{k=0}^N\lambda_k(0)J_{\sigma(k)}:\sigma\in
S_N\right\}, \end{split}\end{equation} then
\begin{equation}\label{DetHC}\begin{split} \det H=O(\exp mx) \qquad
x\to-\infty,\\\det H=O(\exp Mx) \qquad
x\to+\infty.\end{split}\end{equation} if the relevant coefficients
$a_\sigma(t)$ in \eqref{DetH} do not vanish. Secondly, choose
\begin{equation}
\lambda_1(0)=\cdots=\lambda_{k_0}(0)<0,\lambda_{k_0+1}(0)=\cdots=\lambda_N(0)>0,
\end{equation} and carefully select $h_1,\ldots,h_N$ such that
\eqref{DetHC} holds, then we can prove that
\begin{equation}\label{APS}
\lim_{x\to-\infty}S=\lim_{x\to-\infty}H\Lambda H^{-1}=\Lambda,
\qquad\text{for $t\in(-t_0,t_0)$}.
\end{equation} and $\displaystyle{\lim_{x\to+\infty}S}$ is also a diagonal matrix similar to
$\Lambda$, which leads to \begin{equation}\label{APofP}
P'(t,\cdot)=P+[J,S]\in\mathscr{S}(\mathbb{R},sl(N)_J^\bot),\qquad\text{for
$t\in(-t_0,t_0)$}.
\end{equation} (One may admit the conclusion for the moment, and we will
prove it in the next section.) Setting $x\to-\infty$, with the above
property, \eqref{VEP} leads to that
\begin{equation}\label{ICDT}\begin{split}
\alpha'_n(t)&=\alpha_n(t)+f_{n+2}\Lambda+(f_{n+1}-g_n),\\
\alpha'_{n-j}(t)&=\alpha_{n-j}(t)+\sum^{j+1}_{k=0}f_{n-j+k+1}\Lambda^k-g_{n-j}\quad(j\geq0).
\end{split}\end{equation} If we set \begin{equation}\label{BetaDef}
\beta_{j}(\Lambda)=\sum^{n-j+1}_{k=0}f_{j+k+1}\Lambda^k-g_{j}\qquad(0\leq
j\leq n), \end{equation} the solution of the nonlinear PDE via
non-isospectral AKNS hierarchy with the integral constants
$\alpha_j(t)$ can be attained by acting the Darboux transformation
on the seed solution of the equation of the hierarchy with the
integral constants $\alpha_j(t)-\beta_{j}(\Lambda)$.

\section{Proof of the Asymptotic Property}

Firstly, we state an elementary inequality.

\begin{lemma}\label{SInEq} Assume that $x_i,y_i\in\mathbb{R}\,(1\leq i\leq N)$, $x_1\leq x_2\leq\cdots\leq
x_N$ and \begin{equation} M=\max\left\{\sum_{i=1}^N
x_iy_{\sigma(i)}:\sigma\in S_N\right\},\;m=\min\left\{\sum_{i=1}^N
x_iy_{\sigma(i)}:\sigma\in S_N\right\} \end{equation} then
\begin{equation}\begin{aligned}&\sum_{i=1}^N x_iy_{\sigma(i)}=M\quad
\text{if and only if}\quad y_{\sigma(1)}\leq
y_{\sigma(2)}\leq\cdots\leq y_{\sigma(N)},\\&\sum_{i=1}^N
x_iy_{\sigma(i)}=m\quad \text{if and only if}\quad y_{\sigma(1)}\geq
y_{\sigma(2)}\geq\cdots\geq
y_{\sigma(N)}.\end{aligned}\end{equation}
\end{lemma}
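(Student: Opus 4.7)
The lemma is a form of the classical rearrangement inequality, so I would follow the standard pairwise-swap (bubble-sort) argument. The one computational ingredient is the following identity: for any permutation $\sigma$ and any indices $i<j$, letting $\sigma'$ denote the permutation obtained from $\sigma$ by transposing the values at positions $i$ and $j$, one has
\[
\sum_{k=1}^N x_k y_{\sigma'(k)} \;-\; \sum_{k=1}^N x_k y_{\sigma(k)} \;=\; (x_j-x_i)\bigl(y_{\sigma(i)}-y_{\sigma(j)}\bigr),
\]
since every term with $k\neq i,j$ cancels. Because $x_i\leq x_j$, this difference is non-negative precisely when $y_{\sigma(i)}\leq y_{\sigma(j)}$.

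First I would establish the ``if'' direction for the maximum: starting from any $\sigma$, whenever a pair $i<j$ satisfies $y_{\sigma(i)}>y_{\sigma(j)}$, swapping those positions weakly increases the objective. Because each such swap strictly reduces the number of inversions of the finite sequence $(y_{\sigma(1)},\ldots,y_{\sigma(N)})$, iterating the procedure terminates after finitely many steps at a permutation $\sigma^*$ with $y_{\sigma^*(1)}\leq\cdots\leq y_{\sigma^*(N)}$ whose sum is no smaller than the original one. Hence every sorted $\sigma^*$ realises $M$, and the corresponding inequality for $m$ follows either symmetrically or by applying the same reasoning to $(-y_i)$.

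For the converse (``only if'') I would argue by contradiction: suppose $\sigma$ attains $M$ but $y_{\sigma(i)}>y_{\sigma(j)}$ for some $i<j$ with $x_i<x_j$. Then the identity above forces the swapped permutation to have a strictly larger sum, contradicting maximality; so we must have $y_{\sigma(k)}$ non-decreasing on every strictly increasing block of the $x_k$. The minimum case is completely analogous, with inequalities reversed.

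The argument is entirely elementary, and I do not anticipate a serious obstacle. The only delicate point is how to read the ``only if'' on a flat segment of the $x_k$, where reordering the corresponding $y_{\sigma(k)}$ does not change the sum, so the equivalence should be understood modulo this natural ambiguity; in the application to \eqref{DetH} the $x_i$ play the role of the distinct eigenvalues $J_1>\cdots>J_N$, so this ambiguity does not arise there.
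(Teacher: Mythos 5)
The paper does not actually supply a proof of Lemma \ref{SInEq}: it is introduced with ``we state an elementary inequality'' and used directly, so there is nothing to compare your argument against line by line. Your pairwise-swap (rearrangement-inequality) argument is the standard one and is correct: the identity $\sum_k x_k y_{\sigma'(k)}-\sum_k x_k y_{\sigma(k)}=(x_j-x_i)(y_{\sigma(i)}-y_{\sigma(j)})$ gives the ``if'' direction by sorting out inversions (and all sorted permutations produce the same value sequence, hence the same sum, so each attains $M$), and the ``only if'' direction by contradiction. Your closing caveat is also well taken and is in fact a point the paper glosses over: as literally stated, the ``only if'' direction fails when the $x_i$ have ties (with $x_1=x_2$ and distinct $y$'s, an unsorted pairing still attains $M$), so the equivalence needs either the hypothesis that the $x_i$ are distinct or the ``modulo flat blocks'' reading you give. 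In the application around \eqref{DetH} the pairing is between the eigenvalues $J_1>\cdots>J_N$, which are distinct by assumption, and the $\lambda_k(0)$, which are allowed to coincide; so, exactly as you say, one must let the distinct $J_i$ play the role of the $x_i$ (up to reversing the order) and the possibly repeated $\lambda_k(0)$ the role of the $y_i$, and then your version of the lemma suffices for Cases 1--4 of the asymptotic computation. The only cosmetic improvement would be to state explicitly that a sorted arrangement of the $y$-values is unique as a sequence of values, which is the one-line reason every sorted permutation attains the same (maximal) sum.
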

Let $X=\{x_1,x_2,\ldots,x_n\},\; Y=\{y_1,y_2\ldots,y_n\}$ be finite
sets which satisfy $\# X=\# Y$, and define
\begin{equation}\begin{aligned} &\langle X,Y\rangle_{max}=\max\left\{\sum_{i=1}^N
x_iy_{\sigma(i)}:\sigma\in S_N\right\},\\&\langle
X,Y\rangle_{min}=\min\left\{\sum_{i=1}^N x_iy_{\sigma(i)}:\sigma\in
S_N\right\}.\end{aligned} \end{equation} Then one can see that the
coefficients of $H$ and its adjoint $H^*$ satisfy \begin{equation}
H_{ij}=O(\exp(\lambda_jJ_ix)),\;
H_{ij}^*=O(\exp(\langle\Lambda\setminus\lambda_j,J\setminus
J_i\rangle_{min}x)),\;x\to-\infty,\end{equation} which implies that
\begin{equation}
H_{ik}H^*_{jk}=O\left(\exp(\lambda_kJ_i+\langle\Lambda\setminus\lambda_k,J\setminus
J_j\rangle_{min})x\right).\end{equation}

Now, we begin to prove \eqref{APS}, which is equivalent to
\begin{equation} \lim_{x\to-\infty}ent_{ij}((\det H)^{-1}H\Lambda H^*)=\delta_{ij}\lambda_j,
\end{equation} where $J=\{J_1,\ldots,J_N\}$ and $\Lambda=\{\lambda_1,\ldots,\lambda_N\}$
are regarded as two sets with $N$ elements. We prove it in the
following cases.

\noindent \textbf{Case 1}, $i<j$. Noting $J_i>J_j$, it implies that
\begin{equation} \lambda_kJ_i+\langle
\Lambda\setminus\lambda_k,J\setminus
J_j\rangle_{min}>\lambda_kJ_j+\langle
\Lambda\setminus\lambda_k,J\setminus J_j\rangle_{min}\geq m,\;
k>k_0.
\end{equation} So it follows from \begin{equation}
\sum_{k=0}^NH_{ik}H^*_{jk}=0,
\end{equation} that \begin{equation}\begin{split}
&\lim_{x\to-\infty}(\det H)^{-1}\left(\sum_{k\leq
k_0}H_{ik}H_{jk}^*\right)=-\lim_{x\to-\infty}\exp(-mx)\left(\sum_{k>k_0}H_{ik}H_{jk}^*\right)\\
=&-\lim_{x\to-\infty}\exp(-mx)\left(\sum_{k>k_0}O(\exp(\lambda_kJ_i+\langle
\Lambda\setminus\lambda_k,J\setminus
J_j\rangle_{min})x)\right)=0\end{split}
\end{equation} and \begin{equation}\begin{split}
&\lim_{x\to-\infty}ent_{ij}S=\lim_{x\to-\infty}(\det H)^{-1}\left(\sum_{k=1}^N\lambda_kH_{ik}H^*_{jk}\right)\\
=&\lim_{x\to-\infty}\exp(-mx)\left(\lambda_1\sum_{k\leq
k_0}H_{ik}H^*_{jk}+\lambda_N\sum_{k>k_0}H_{ik}H^*_{jk}\right)=0.
\end{split}\end{equation}

\noindent \textbf{Case 2}, $i>j$. It is similar to case 1 that
$J_i<J_j$ leads to that
\begin{equation} \lambda_kJ_i+\langle
\Lambda\setminus\lambda_k,J\setminus
J_j\rangle_{min}>\lambda_kJ_j+\langle
\Lambda\setminus\lambda_k,J\setminus J_j\rangle_{min}\geq m,\;k\leq
k_0.
\end{equation} Then one can find that
\begin{equation}\lim_{x\to-\infty}(\det H)^{-1}\left(\sum_{k>
k_0}H_{ik}H_{jk}^*\right)=-\lim_{x\to-\infty}(\det
H)^{-1}\left(\sum_{k\leq k_0}H_{ik}H_{jk}^*\right)=0,
\end{equation} which implies  \begin{equation}\lim_{x\to-\infty}ent_{ij}S=0.
\end{equation}

\noindent \textbf{Case 3}, $i=j\leq k_0$. It follows from Lemma
\ref{SInEq} that \begin{equation} \lambda_kJ_i+\langle
\Lambda\setminus\lambda_k,J\setminus J_i\rangle_{min}\geq m,\quad
k>k_0,
\end{equation} which implies that \begin{equation}\begin{split} &\lim_{x\to-\infty}(\det
H)^{-1}\left(\sum_{k\leq
k_0}H_{ik}H_{ik}^*\right)\\=&\lim_{x\to-\infty}(\det
H)^{-1}\left(\sum_{k=1}^nH_{ik}H_{ik}^*
-\sum_{k>k_0}H_{ik}H_{ik}^*\right)\\=&\lim_{x\to-\infty}(\det
H)^{-1}\left(\det H -\sum_{k>k_0}H_{ik}H_{ik}^*\right)=1.
\end{split}\end{equation} Hence \begin{equation}
\lim_{x\to-\infty}ent_{ii}S=\lim_{x\to-\infty}(\det
H)^{-1}\left(\lambda_1\sum_{k\leq
k_0}H_{ik}H^*_{ik}+\lambda_N\sum_{k>k_0}H_{ik}H^*_{ik}\right)=\lambda_1.
\end{equation}

\noindent \textbf{Case 4}, $i=j>k_0$. It is similar to case 3 that
one can find \begin{equation}
\lim_{x\to-\infty}ent_{ii}S=\lim_{x\to-\infty}(\det
H)^{-1}\left(\lambda_1\sum_{k\leq
k_0}H_{ik}H^*_{ik}+\lambda_N\sum_{k>k_0}H_{ik}H^*_{ik}\right)=\lambda_N.
\end{equation}

That $\displaystyle{\lim_{x\to+\infty}S}$ is a diagonal matrix
similar to $\Lambda$ can be proved by the same means. Combined with
the asymptotic property of the elementary solution $\Phi$, one can
find \eqref{APofP} evidently.

\section{The Soliton Solution of the Non-isospectral MKdV Equation}

The following equation \begin{equation}\label{NISMKdV}
u_t+(1-\frac14x)(u_{xxx}+6u^2u_x)-\frac34u_{xx}-u^3-\frac12u_x\int_{-\infty}^xu^2\mathrm{d}x=0,
\end{equation} compared with the standard MKdV
equation \begin{equation} u_t+6u^2u_x+u_{xxx}=0,
\end{equation} is called the non-isospectral MKdV equation, which is
equivalent to the zero curvature condition of the non-isospectral
AKNS system defined by
\begin{equation*} n=3,J=\mathrm{diag}\,(1,-1),p=-q=u(t,x),\alpha_3=-4J,\alpha_0=\alpha_1=\alpha_2=0.
\end{equation*} and $f(\lambda)=\lambda^3$. Directly calculation leads to
$\lambda^2=(\kappa-2t)^{-1}\,(\kappa\in\mathbb{R})$. Choose
\begin{equation} \Lambda=\left(\begin{array}{cc}
\lambda_0&0\\
0&-\lambda_0\end{array}\right)=\left(\begin{array}{cc}
-(\kappa_0-2t)^{-\frac12}&0\\
0&(\kappa_0-2t)^{-\frac12}\end{array}\right),
\end{equation} then one can find
\begin{equation} g(\lambda)=\lambda^2+\lambda_0^2.\end{equation}
According to \eqref{BetaDef}, $\beta(\Lambda)$ should be defined by
\begin{equation}\begin{array}{ll}
\beta_3(\Lambda)=0,&\beta_2(\Lambda)=f_3-g_2=0,\\
\beta_1(\Lambda)=f_3\Lambda=\Lambda,
&\beta_0(\Lambda)=f_3\Lambda^2-g_0=0.
\end{array}\end{equation} Hence the soliton solution of
\eqref{NISMKdV} can be attained by acting the Darboux transformation
on the nonisospectral AKNS system with the integral constants
$\alpha'(t)=\alpha(t)-\beta(\Lambda)$, i.e.
$\alpha'_3(t)=-4J,\alpha'_2=\alpha'_0=0,\alpha'_1=-\Lambda$. Choose
the trivial solution for the seed solution, i.e. set $P=0$, then the
matrices $U,V_i$ with the integral constants $\alpha'(t)$ are
defined by the following
\begin{equation}\begin{split} U&=\left(\begin{array}{cc}
(\kappa-2t)^{-\frac12}&0\\
0&-(\kappa-2t)^{-\frac12}\end{array}\right)\\
V_3&=\left(\begin{array}{cc}
x-4&0\\0&-x+4\end{array}\right),\qquad V_2=0,\\
V_1&=-\Lambda=\left(\begin{array}{cc}
(\kappa_0-2t)^{-\frac12}&0\\
0&-(\kappa_0-2t)^{-\frac12}\end{array}\right),\qquad V_0=0,
\end{split}\end{equation} and the relevant elementary solution is \begin{equation}
\Phi(x,t,\lambda)=\left(\begin{array}{cc} C(t)\exp(\lambda
x)&0\\0&(C(t))^{-1}\exp(-\lambda x)\end{array}\right),\end{equation}
where \begin{equation}
C(t)=\exp\left(-\int_0^t(4\lambda^3+\lambda_0\lambda)\mathrm{d}t\right).
\end{equation} Set \begin{equation}H=\left(\begin{array}{cc}
C_0(t)\exp(\lambda_0x)&-(C_0(t))^{-1}\exp(-\lambda_0x)\\
(C_0(t))^{-1}\exp(-\lambda_0x)&C_0(t)\exp(\lambda_0x)\end{array}\right),\end{equation}
where \begin{equation}
C_0(t)=\exp\left(-\int_0^t(4\lambda_0^3+\lambda_0^2)\mathrm{d}t\right)=\exp\left(-4\lambda_0-\ln(-\lambda_0)+c_0\right)
\end{equation} and $c_0=(4\lambda_0+\ln(-\lambda_0))|_{t=0}$, then \begin{equation}
S=H\Lambda H^{-1}=\lambda_0\left(\begin{array}{lr} \tanh2\xi&\mathrm
{sech}\,2\xi\\ \mathrm {sech}\,2\xi& -\tanh2\xi\end{array}\right),
\end{equation} where
$\xi=\lambda_0x-4\lambda_0-\ln(-\lambda_0)+c_0$. Following from
$P'=P+[J,S]$, one can see the 1-soliton solution of \eqref{NISMKdV}
\begin{equation} u=2\lambda_0\,\mathrm{sech}\,2\zeta.
\end{equation} By this means, the 2-soliton solution can also be
attained.

\section*{Acknowledgements}

The author would like to thank Prof.\ C. H. Gu, Prof.\ H. S. Hu and
Prof.\ Z. X. Zhou for many helpful discussions and great
encouragement. This work is supported by Program for Young Excellent
Talents in Tongji University.


\begin{thebibliography}{References}

\bibitem{AKNS}Ablowitz, M.J. Kaup, D.J. Newell, A.C. and Segur,
H.\ Studies in Appl.\ Math.\ 53(1974), no.\ 4, 249.

\bibitem{BC}Beals, R. and Coifman, R. R\@. Comm.\ Pure Appl.\ Math. 37-1(1984)39.

\bibitem{Gu3} Gu, C. H\@. Integrable system, Nankai Lectures on Math.\
Phys. World Scientific, Singapore, (1989)162.

\bibitem{Gu} Gu, C. H\@. Analyse, Vari\'et\'es et
Physique, Proc. of Colloque International en l'honneur d'Yvonne
Choquet-Bruhat, Kluwer, (1992).

\bibitem{Gu1}Gu, C. H\@. Lett.\ Math.\ Phys.\ 26(1992)199.

\bibitem{Gu2}Gu, C. H\@. Proc. of
the Workshop on Qualitative Aspect and Applications of Nonlinear
Equations, Trieste, 1993. World Scientific, (1994)11.

\bibitem{GH}Gu, C. H\@. Soliton Theary and Its
Applications. ed. Springer-Verlag and Zhejiang Science and
Technology Publising House, (1995).

\bibitem{GHZ}Gu, C. H., Hu, H. S. and Zhou, Z. X\@.
Darboux transformation in soliton theory and its geometric
applications. Shanghai Scientific and Technical Publishers, (1999).

\bibitem{GZ}Gu, C. H. and  Zhou, Z. X\@.  Lett.\
Math.\ Phys.\ 13(1987)179.

\bibitem{GZ1}Gu, C. H. and  Zhou, Z. X\@. Lett.\ Math.\ Phys.\ 32(1994)1.

\bibitem{RS}Rogers, C. and  Schief, W. K\@. B\"acklund and Darboux
transformations. Cambridge University Press (2002).

\bibitem{TU}Terng, C.-L. and Uhlenbeck, K\@. Communications on Pure and Applied Mathematics. LIII(2000)1.

\bibitem{TZ}Tian, C. and  Zhang, Y. J\@. J. Math.\ Phys.\ 31(1990)2150.

\bibitem{TZ1}Tian, C. and  Zhang, Y. J\@. J. Phys.\ A. 23(1990)2867.

\bibitem{Zhoulj}Zhou, L. J\@. Phys.\ Lett.\ A. 345(2005)314.

\end{thebibliography}
\end{document}